\DeclareMathOperator*{\argmax}{argmax}
\newcounter{definition}
\newenvironment{definition}[1]{\refstepcounter{definition}\par\medskip
\noindent 
\textbf{Definition \thedefinition~(#1)} \em \rmfamily}
{\medskip}
\newenvironment{definition*}{\refstepcounter{definition}\par\medskip
\noindent 
\textbf{Definition \thedefinition.} \em \rmfamily}
{\medskip}
\newcounter{proposition}
\newenvironment{proposition}[1]{\refstepcounter{proposition}\par\medskip
\noindent 
\textbf{Proposition \theproposition~(#1)} \em \rmfamily}
{\medskip}
\newenvironment{proposition*}{\refstepcounter{proposition}\par\medskip
\noindent 
\textbf{Proposition \theproposition.} \em \rmfamily}
{\medskip}
\newenvironment{remark}
{\par\medskip\noindent\textbf{Remark }\em\rmfamily}
{\medskip}
\newenvironment{proof}{
   \indent \textit{Proof.} \rmfamily}{\hfill $\square$}
\newcommand{\BR}{\mathrm{BR}_{\tau}}
\newcommand{\uA}{D}
\newcommand{\ua}{d}
\newcommand{\oA}{U}
\newcommand{\oa}{u}
\begin{document}

\date{}

\title{\LARGE \bf
Omniscient Attacker in Stochastic Security Games with Interdependent Nodes}

\author{Yuksel Arslantas, Ahmed Said Donmez, Ege Yuceel and Muhammed O. Sayin
\thanks{Y. Arslantas, A. S. Donmez and M. O. Sayin are with the Department of Electrical \& Electronics Engineering at Bilkent University, Ankara, T\"{u}rkiye 06800. E. Yuceel is with he Department of Electrical and Computer Engineering, University of Illinois at
Urbana-Champaign, Champaign, IL, 61820.  (Emails: {\tt\small yuksel.arslantas@bilkent.edu.tr}, { \tt\small said.donmez@bilkent.edu.tr}, { \tt\small eyceel2@illinois.edu},  { \tt\small sayin@ee.bilkent.edu.tr})}%
}

\maketitle

\bigskip

\begin{center}
\textbf{Abstract}
\end{center}
The adoption of reinforcement learning  for critical infrastructure defense introduces a vulnerability where sophisticated attackers can strategically exploit the defense algorithm's learning dynamics. While prior work addresses this vulnerability in the context of repeated normal-form games, its extension to the stochastic games remains an open research gap. We close this gap by examining stochastic security games between an RL defender and an omniscient attacker, utilizing a tractable linear influence network model. To overcome the structural limitations of prior methods, we propose and apply neuro-dynamic programming. 
Our experimental results demonstrate that the omniscient attacker can significantly outperform a naive defender, highlighting the critical vulnerability introduced by the learning dynamics and the effectiveness of the proposed strategy. 

\begin{spacing}{1.245}

\section{Introduction} 
The growing sophistication of cyber-attacks against critical infrastructure, such as power grids or financial networks \citep{ref:Lehto22}, necessitates the development of innovative defense strategies. Traditional, static defense mechanisms struggle to adapt to attackers' ever-evolving tactics. Reinforcement learning (RL) offers a compelling solution to complex and dynamic threat landscapes by enabling agents to learn and adapt \citep{ref:Li19,ref:Adawadkar22}. However, alongside these benefits, a crucial aspect to consider is the vulnerability of RL algorithms themselves \citep{ref:Deng19,ref:Vundurthy23,ref:Arslantas24,ref:Bajaj24}. For example, an advanced adversary could exploit an RL-based security system by conducting reconnaissance \citep{ref:Mazurczyk21} or employing opponent modeling \citep{ref:Yu22} to learn the specific algorithm and its decision-making process. This knowledge could allow the attacker to bypass security measures or manipulate the system behavior for malicious purposes.

To develop robust RL-based defense structures, we need to understand how and to what extent these algorithms might be vulnerable to exploitation, where the RL algorithm becomes a hindrance to the agent due to differences in information structures and capabilities among the agents. For example in a competitive environment an agent with access to privileged information or greater computational power could exploit the other agent by compromising security, reducing the performance, or gaining an unfair advantage. Security games provide a valuable framework for analyzing this vulnerability and exploitation framework between a defender (security system) and an attacker (malicious actor) in various class of interactions \citep{ref:Sinha18} including cybersecurity setting \citep{ref:Alpcan10,ref:Etesami19}. Within such a framework, both players make strategic decisions to achieve their goals that are maximizing the security for the defender and bypassing defenses for the attacker. 

A significant body of work has leveraged this game-theoretic framework to investigate exactly how a strategically sophisticated opponent can exploit the predictable nature of common learning algorithms such as fictitious play \citep{ref:Vundurthy23}, Q-learning \citep{ref:Arslantas24}, experience-weighted attraction \citep{ref:Arslantas24b}, and no-regret learning \citep{ref:Deng19,ref:Guo25}. These studies commonly assumed an advanced agent with full knowledge of the other agents' learning dynamics and the underlying game structure. While the game structure varied, including zero-sum vs. general-sum games or simultaneous vs. sequential play, all of these works focused on the repeated play of normal-form games. As pointed out in the recent tutorial paper \cite{ref:Vamvoudakis25}, investigating the vulnerability of such algorithms in stochastic games remains a gap in the literature.

\begin{figure*}[t!]
    \centering
    \includegraphics[width=\columnwidth]{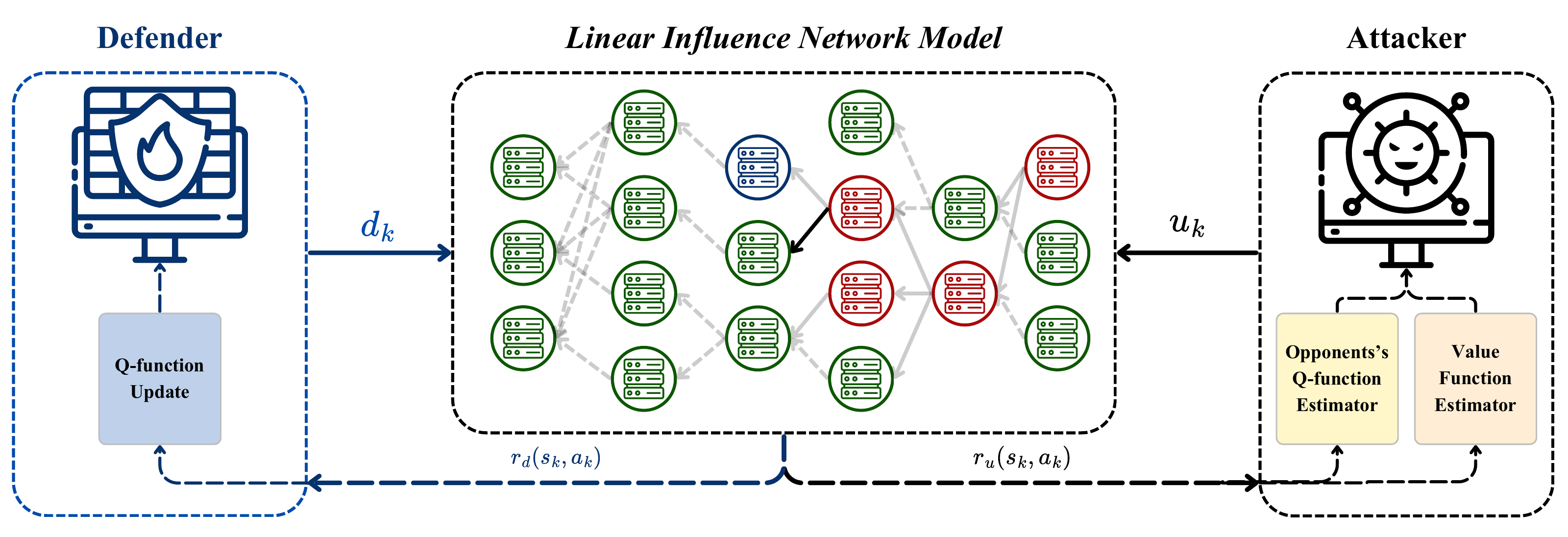}
    \caption{A linear influence network illustrating compromised and secured security assets. Green servers represent uncompromised assets, red servers represent compromised assets, and the blue server represents a defended asset. Shaded arrows depict interdependencies between assets: solid arrows indicate usable attack paths, and dashed arrows indicate unusable paths. The solid black arrow represents the attacker's direct attack.}\label{fig:model}
\end{figure*}

To address this gap, we examine the stochastic security games between a defender employing RL algorithm and a strategically sophisticated (omniscient) attacker. As a model of stochastic security game, we choose linear influence network model \citep{ref:Miura08,ref:Nguyen09,ref:Alpcan10}. This model captures the interdependencies among the security assets while providing a tractable state space, even for tabular learning dynamics. We extend the methodology presented in \cite{ref:Arslantas24}. In their work, \cite{ref:Arslantas24} construct a Markov decision process using the opponents' learning parameters within the context of repeated games and employ a quantization-based approximation scheme. While their method provides theoretical error bounds, extending it to settings with an increasing number of actions and multiple states necessitates an approximation structure different from quantization. Therefore, in this paper, we propose using one of the neuro-dynamic programming approaches which is approximate value iteration and demonstrate that the existing results for the repeated play of normal-form games can be generalized to stochastic games.

The paper is organized as follows: In Section \ref{sec:pre} we provide preliminary information on stochastic games and linear influence network models. In Section \ref{sec:problem} we formulate the problem and establish theoretical results. In Section \ref{sec:experiments} and \ref{sec:conclusion} we demonstrate the experimental results and conclude the paper, respectively.

\begin{figure*}[t!]
    \centering
    \includegraphics[width=\columnwidth]{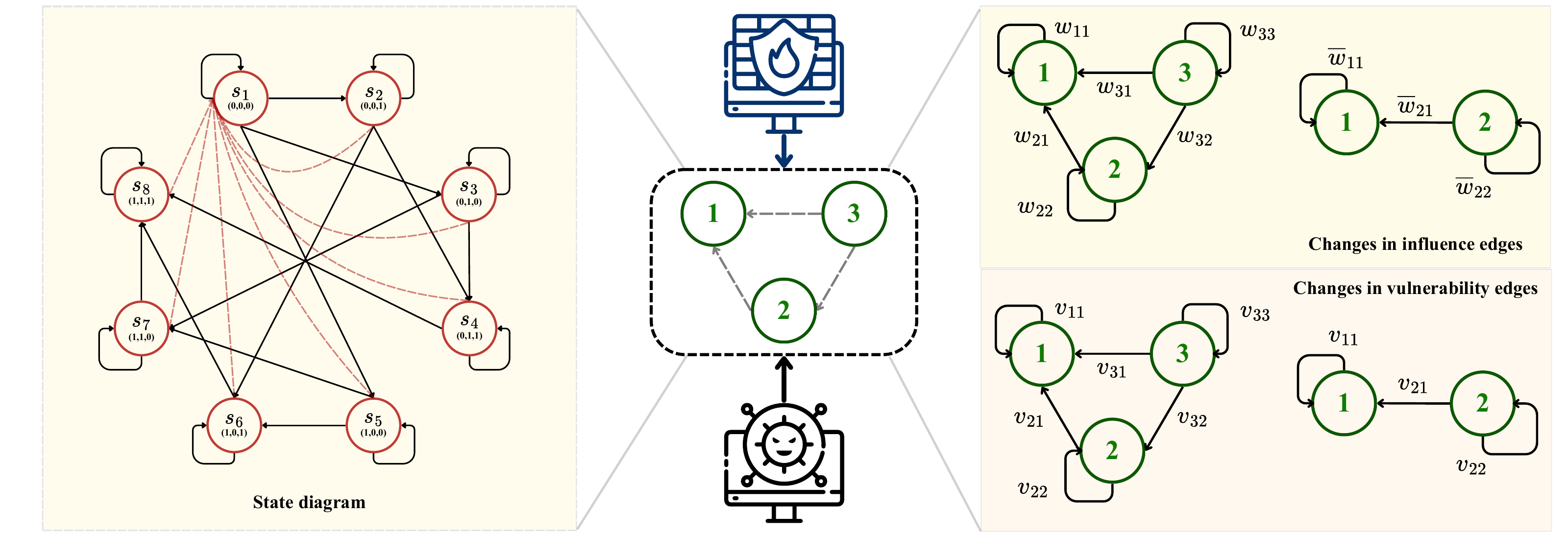}
    \caption{The illustration depicts the game model of a LIN. The left panel displays a state diagram involving three security assets, where solid lines denote state transitions based on defender and attacker actions, and dashed lines indicate a reset to the original state. The right panel illustrates the network's evolution, where edges represent the influence and vulnerability one asset exerts on another. When an asset is compromised, it is removed from the network and its influence edges are recomputed; however, securing the asset restores its original connections. Therefore, the influence matrix $I$ is stochastic. On the other hand, the vulnerability matrix $V$ is not stochastic because even a compromised asset continues to impact the vulnerabilities of others.}\label{fig:game}
\end{figure*}

\section{Preliminary Information}\label{sec:pre}
\subsection{Stochastic Games}
Consider a two-agent \textit{stochastic game} (SG) with \textit{finite} states and actions that can be characterized by the tuple $\langle S,(A^i,r^i,\gamma^i)_{i=1}^2,p\rangle$, where $S$ denotes the {finite} set of states, $A^i$ and $r^i:A\rightarrow\mathbb{R}$ for $A\coloneqq\prod_j A^j$ denote, resp., agent $i$'s action set and reward function. Furthermore, $p(\cdot\mid\cdot)$ is the transition function such that $p(s_+\mid s,a)$ for $(s,a,s_+)\in S\times A \times S$ is the probability of transition from $s$ to $s_+$ under action profile $a\in A$. 

At each stage $k=0,1,\ldots$, both agents observe the current state $s_k\in S$, take (possibly random) action $a_k^i\in A^i$ \textit{simultaneously}, then receives the reward $r_k^i = r^i(s_k,a_k)$ for $a_k = \{a_k^j\}_{j=1}^2$.
The state $s_k$ changes to $s_{k+1}$ with probability $p(s_{k+1}\mid s_k,a_k)$. The agent's goal to maximize the utility
\begin{flalign}\label{eq:utility}
    \mathrm{E}\left[\sum_{k=0}^{\infty}(\gamma^i)^k r^i(s_k,a_k)\right]
\end{flalign}
with the discount factor $\gamma^i \in (0,1)$, where the expectation is taken over the randomness on $(s_k,a_k)$.

If there is only one agent, say agent $i$, then SG reduces to \textit{Markov decision process} (MDP). Call the value of the state-action pair by the \textit{Q-function}. Based on the principle of optimality that the agent would play the best action in future stages, the Q-function, denoted by $q^i:S\times A^i\rightarrow\mathbb{R}$, is the (unique) solution to the following fixed-point equation \cite[Chapter 3]{ref:Sutton18}
\[
    q^i(s,a^i) = r^i(s,a^i) + \gamma\sum_{s_+\in S}p(s_+\mid s,a^i) \max_{\tilde{a}^i\in A^i} \{q^i(s_+,\tilde{a}^i)\}.
\]
Let $\Delta(A)$ denote the probability simplex over a finite set $A$. Then, any Markov stationary strategy $\pi^i:S\rightarrow\Delta(A^i)$ satisfying
$\pi^i(s) \in \argmax_{\mu^i}\mathrm{E}_{a^i\sim \mu^i}[q^i(s,a^i)]$ maximizes the utility \eqref{eq:utility}.

When the model of the underlying MDP (e.g., $r^i(\cdot)$ and $p(\cdot\mid\cdot)$) is unknown, the Q-learning algorithm, introduced by \cite{ref:Watkins92}, can recursively estimate the Q-function according to
\begin{subequations}\label{eq:Qupdate}
    \begin{flalign}
        &q_{k}^i(s_{k-1}^{},a_{k-1}^i) = (1-\alpha^i)q_{k-1}^i(s_{k-1}^{},a_{k-1}^i) + \alpha^i\big(r_{k-1}^i + \gamma^i \max_{\tilde{a}^i\in A^i}\{q_{k-1}^i(s_k,\tilde{a}^i)\}\big),\\
        &q_{k}^i(s,a^i) = q_{k-1}^i(s,a^i)\quad\forall (s,a^i)\neq(s_{k-1},a_{k-1}^i),
    \end{flalign}
\end{subequations}
where $\alpha^i\in (0,1)$ is some step size and $(s_{k-1},a_{k-1}^i)$ is the pair of the state and action at stage $k-1$. The action $a_k^i$ is taken according to the soft-max function $\BR:\mathbb{R}^{|A^i|}\rightarrow\Delta(A^i)$, defined by
\begin{flalign}\label{eq:soft}
    \BR(q_k^i(s_k,\cdot))(a^i) := \frac{\exp(q_k^i(s_k,a^i)/\tau)}{\sum_{\tilde{a}^i\in A^i}\exp(q_k^i(s_k,\tilde{a}^i)/\tau)}>0\quad\forall a^i
\end{flalign}
for some temperature parameter $\tau>0$ controlling the level of exploration, i.e., $a_k^i\sim \BR(q_k^i(s_k,\cdot))$. 

In SG, the other agent's play might be unknown similar to the model of the underlying environment. For such cases, agents can still follow the Q-update \eqref{eq:Qupdate} independently \textit{as if} there are no other agents or the other agents' play is a part of the underlying environment's (unknown) dynamics. This approach is known as \textit{independent Q-learning} (IQL), deployed widely in practice. 

\subsection{Linear Influence Network Models}
Linear influence network (LIN) models represent the relationships and interdependencies between agents and assets using a graph structure \citep{ref:Miura08,ref:Miura08b,ref:Nguyen09,ref:Zhou16}. These models allow for the analysis of how actions taken on one asset can affect the others. This underlying structure makes them well-suited for modeling security games, allowing security assets and their interdependencies to be represented as a graph, as illustrated in Fig. \ref{fig:model}.

Consider a weighted directed graph $\mathcal{G} = \{\mathcal{N},\mathcal{E}\}$, where $\mathcal{N}$ is the set of nodes, which we will call security assets, targeted by an attacker and defended by a defender, and $\mathcal{E}$ is the set of edges representing the connections. The weights on the edges represent the strength of the influence or vulnerability between assets. We can categorize the edges in the LINs into two types:
\begin{itemize}
    \item \textbf{Influence Edges ($\mathcal{E}_s$):} These edges indicate how much the security of one asset can be affected by the state of another. Specifically, we can define the \textit{influence matrix} as
    \begin{flalign}
        I = \begin{cases}
            w_{ij}, \quad &\text{if} \quad e_{ij} \in \mathcal{E}_s \\
            0, \quad &\text{otherwise},
        \end{cases}
    \end{flalign}
    where $0 < w_{ij} \leq 1$ $\forall i,j \in \mathcal{N}$ and $\sum_{i=1}^{|\mathcal{N}|} w_{ij} = 1$, $\forall j \in \mathcal{N}$.
    \item \textbf{Vulnerability Edges ($\mathcal{E}_v$):} These edges represent how exploiting one asset might make others vulnerable. We can define the \textit{vulnerability matrix} as follows
    \begin{flalign}
        V = \begin{cases}
            v_{ij}, \quad &\text{if} \quad e_{ij} \in \mathcal{E}_v \\
            0, \quad &\text{otherwise},
        \end{cases}
    \end{flalign}
    where $0 \leq v_{ij} \leq 1$ measures the vulnerability of asset $j$ caused by asset $i$ due to the interdependencies within the system. The support to node $j$ is defined to be $v_j \coloneqq \sum_{i}v_{ij}$.  
\end{itemize} 
Let \( x = \{x_1, x_2, \ldots, x_{|N|}\} \) be the vector measuring the value of independent security assets. These can be thought of as individual security assets without any connections as singletons. However, in a network structure, independent security assests lack significance. Therefore, we introduce the vector of effective security measures $ y = \{y_1, y_2, \ldots, y_{|N|}\} $ by using the influence matrix $ I $ such that
\[
y = Ix.
\]
This concept of influence plays a critical role in analyzing attacker and defender strategies within a LIN.

In the interaction between the attacker and the defender, the attacker can attack node \( i \), and the defender can defend node \( j \). Let \( u \) represent the attack by the attacker and \( d \) represent the defense by the defender. Furthermore, let the state of the network, \( s \), indicate which assets are compromised or not. 
We can define the probability that security asset $i$ is compromised as 
$$
p^i(s,u,d) = \begin{cases} 
    \max \left\{ p^i_{n0} (1-v_i) + p^i_{n1} v_i ,1 \right\} 
        & \hspace{0.25em} \text{if } u = i,\\
        & \hspace{0.25em} d \neq i,\\
        & \hspace{0.25em} s(i) = 0, \\
    \max \left\{ p^i_{d0} (1-v_i) + p^i_{d1} v_i ,1 \right\} 
        & \hspace{0.25em} \text{if } u = d = i,\\
        & \hspace{0.25em} s(i) = 0, \\
    0 
        & \hspace{0.25em} \text{otherwise}.
\end{cases}
$$
where $0 \leq p^i_{n1}\leq p^i_{n0} \leq 1$ be the standalone probability of compromise of node $i$, when there is full support and there is no support in the case defender does not defend node $i$ respectively. Similarly, $0 \leq p^i_{d1}\leq p^i_{d0} \leq 1$ are standalone compromise probabilities when there is full support and no support for node $i$, respectively, in the case that node $i$ is defended. Also, $s(i) = 0$ if the node $i$ is uncompromised and $s(i)=1$ if the node $i$ is compromised. 

The attacker's reward function $r(s,u,d)$ considers the effectiveness of compromising node $i$ within the current network state. Therefore, the reward function becomes
\begin{flalign}
    r(s,u,d) = p^i (s,u,d)y(s).
\end{flalign}
We can now formulate this attacker-defender interaction as a stochastic game, where the state of the network evolves based on the actions of both the attacker and defender, as well as the underlying probabilities \citep{ref:Nguyen09, ref:Alpcan10}. 

\section{Problem Formulation}\label{sec:problem}
Consider a two-agent SG $\mathcal{M}=\langle S,\uA, \oA, r_d, r_u, \gamma, p\rangle$ played by \textit{attacker} and \textit{defender} agents:
\begin{itemize}
    \item \textbf{Defender} agent is a \textit{naive} agent following the IQL algorithm \eqref{eq:Qupdate} based on its local information (i.e., state and reward received) as if there are no other agents. 
    \item \textbf{Attacker} agent is an \textit{advanced} decision-maker with the complete knowledge of the underlying game dynamics and the employed learning rule, and access to the joint actions of every player.
\end{itemize}
The state space $S$ indicates which security assets are compromised or not. We denote the action sets of the defender and attacker as $\uA$ and $\oA$, respectively. In LIN model, the defender can defend an uncompromised security asset and the attacker can attack, similarly, to an uncompromised one. 
The reward functions for each agent follow a similar notation to action sets. Specifically, $r_d(s,a)$ and $r_u(s,a)$ represent the reward function for the defender and the attacker, respectively, where $a \in A := D \times U$ is the joint action taken by the agents. $p(s_+|s,a)$ represents the probability of transition to state $s_+$ from $s$ when the joint action $a$ is taken.

At each stage $k$, defender takes action $\ua_k \sim \BR (q_k(s_k,\cdot))$ independently, depending on its Q-function estimate $q_k$. Attacker, knowing the employed learning dynamic, can track $q_k$ and compute the play of defender $\BR(q_k(s_k,\cdot))$. 
Therefore, the attacker can reformulate the game $\mathcal{M}$ as an MDP in which the defender is considered a non-strategic player and is modeled as part of the system.
\begin{definition}{Markov Decision Process}
    Given the SG $\mathcal{M}$, its MDP for the attacker can be characterized by the tuple $\mathcal{Z} = \langle Z, \oA, r_u, \overline{p},\gamma \rangle$. The state space $Z \coloneqq S \times Q$ is the compact set of states such that $Q \subset \mathbb{R}^{S \times \uA}$ corresponds to all possible Q-function estimates of the defender. The reward function $r_u: Z \times \oA \rightarrow \mathbb{R}$ is given by
    \begin{flalign}\label{eq:reward}
        r(z,\oa) = \mathrm{E}_{\ua\sim \BR(q)}[r_u(s,\oa,\ua)] \quad \forall (z,\oa)\in Z \times \oA.
    \end{flalign}
    The transition kernel $\overline{p}(\cdot|\cdot)$ determines the evolution of states according to transition probabilities $p(\cdot|\cdot)$ and the update rule \eqref{eq:Qupdate}. The discount factor $\gamma \in (0,1)$ is as described in $\mathcal{M}$.
\end{definition}

Given $(z,a)$, Q-function estimate of the defender $\widehat{q}$ can get updated to $\widehat{q}_+$ according to
\begin{subequations}
    \begin{flalign}
        &\widehat{q}_+(s,a) = (1-\alpha)\widehat{q}(s,a)+\alpha\left( r_d(s,a) + \gamma \max_{\tilde{a} \in \uA(s_+,\tilde{a})} \widehat{q}(s_+,a) \right),\\
        &\widehat{q}_+(\widehat{s},\widehat{a}) = \widehat{q}(\widehat{s},\widehat{a}) \quad \forall(\widehat{s},\widehat{a}) \neq (s,a),
    \end{flalign}
\end{subequations}
with probability $p(s_+|s,a)>0$.

With the complete knowledge of the employed IQL, attacker can compute the state $z_k \in Z$. Let the attacker follow stationary strategy $\pi: Z \rightarrow \Delta(\oA)$. Given the stationary strategy the attacker's utility \eqref{eq:utility} for the underlying SG $\mathcal{M}$ can be rewritten as 
\begin{flalign}
    \overline{U}(\pi) \coloneqq \mathrm{E}\left[ \sum_{k=0}^{\infty} \gamma^k r(z_k,\oa_k) \right]
\end{flalign}
for the MDP $\mathcal{Z}$ where the expectation is now taken over the randomness on $(z_k,\oa_k)$ due to \eqref{eq:reward}.
\begin{proposition}{Optimal Policy}
    There always exists an optimal stationary policy for $\mathcal{Z}$.
\end{proposition}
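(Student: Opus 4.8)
The plan is to recognize $\mathcal{Z}$ as a \emph{discounted MDP on a compact Borel state space with a finite action set and a bounded reward}, and then invoke the standard contraction/value-iteration machinery for such models. Since $Z = S\times Q$ is compact, $\oA$ is finite, and the one-stage reward \eqref{eq:reward} is a convex combination of the bounded game rewards $r_u(s,\oa,\ua)$, the function $r(z,\oa)$ is uniformly bounded, say $|r(z,\oa)|\le R_{\max}$; together with $\gamma\in(0,1)$ this is exactly the setting in which an optimal \emph{stationary} (indeed deterministic) policy is guaranteed to exist.

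First I would verify that $\overline{p}$ is a well-defined stochastic kernel and record its explicit structure. Given $z=(s,q)$ and $\oa$, the defender draws $\ua\sim\BR(q(s,\cdot))$, the environment draws $s_+\sim p(\cdot\mid s,(\ua,\oa))$, and the estimate is updated deterministically by \eqref{eq:Qupdate} to some $q_+ = F(q,s,(\ua,\oa),s_+)$, so that
\[
\overline{p}(\cdot\mid z,\oa) = \sum_{\ua\in\uA}\BR(q(s,\cdot))(\ua)\sum_{s_+\in S}p(s_+\mid s,(\ua,\oa))\,\delta_{(s_+,\,F(q,s,(\ua,\oa),s_+))}(\cdot),
\]
a finite mixture of Dirac measures whose weights and atoms depend continuously on $q$: the weights are continuous because $\BR$ is a smooth soft-max, and the atom locations are continuous because the update $F$ is continuous in $q$ (affine in each coordinate except for a Lipschitz $\max$ term). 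Hence $\overline{p}$ is weakly continuous (Feller).

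Second I would introduce the Bellman optimality operator $T$ on the Banach space of bounded functions on $Z$ equipped with the sup-norm,
\[
(TV)(z) = \max_{\oa\in\oA}\left\{r(z,\oa)+\gamma\int_Z V(z_+)\,\overline{p}(dz_+\mid z,\oa)\right\},
\]
and show it is a $\gamma$-contraction, since the $\max$ over the finite set $\oA$ is nonexpansive and $\overline{p}(\cdot\mid z,\oa)$ is a probability measure; the Banach fixed-point theorem then yields a unique fixed point $V^\star$, the optimal value function, and the Feller property ensures $V^\star$ may be taken bounded and continuous.

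Finally, because $\oA$ is \emph{finite}, for each $z$ the set of maximizers in the Bellman equation is nonempty, and selecting one with ties broken by a fixed ordering of $\oA$ yields a stationary deterministic policy $\pi^\star$; its measurability is immediate because each objective $z\mapsto r(z,\oa)+\gamma\int V^\star\,\overline{p}(dz_+\mid z,\oa)$ is measurable and $\oA$ is finite, so no general measurable-selection theorem is required. A routine verification argument then gives $\overline{U}(\pi^\star)=V^\star$, so $\pi^\star$ is optimal. The step I expect to be the main obstacle is establishing the regularity of $r$ and $\overline{p}$ that legitimizes both the contraction on the chosen function class and the greedy selection --- in particular, confirming weak continuity of the mixture kernel and continuity of the deterministic update $F$ in the continuous coordinate $q$; once this is in place, the remainder is boilerplate discounted dynamic programming.
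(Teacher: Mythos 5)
Your proof is correct, but it takes a genuinely different route from the paper. The paper's proof is a one-line citation: it observes that $Z = S\times Q$ is a Polish space (a finite set times a compact subset of $\mathbb{R}^{S\times\uA}$) and that the action set is finite, and then invokes \cite[Theorem 6.2.12]{ref:Puterman14} to conclude existence of an optimal stationary policy. You instead re-derive the classical result from scratch: you write $\overline{p}$ explicitly as a finite mixture of Dirac measures whose weights (soft-max probabilities) and atoms (the deterministic Q-update $F$) are continuous in $q$, deduce the Feller property, run the Banach fixed-point argument for the Bellman operator on the continuous bounded functions on $Z$ (which is complete under the sup-norm, and equals $C(Z)$ since $Z$ is compact metric), and use finiteness of $\oA$ to get a measurable greedy selector by fixed tie-breaking, avoiding any measurable-selection theorem. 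What your approach buys: it is self-contained, it explicitly verifies the regularity of $r$ and $\overline{p}$ (measurability/continuity of the reward and kernel) that the citation-based proof implicitly assumes but never checks, and it yields strictly more information --- a continuous optimal value function and a deterministic stationary optimizer. What the citation buys: brevity, and it absorbs the one step you compress into ``a routine verification argument,'' namely the standard induction showing $V^\star \geq \overline{U}(\pi)$ for arbitrary, possibly history-dependent and randomized, policies, using boundedness of $V^\star$ to control the tail term; that step is standard and poses no obstacle, so your argument has no gap, only a different division of labor between what is proved and what is quoted.
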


\begin{proof}
    The proof follows from \cite[Theorem 6.2.12]{ref:Puterman14} based on the observation that the (continuum) state space $Z = S \times Q$ in $\mathcal{Z}$ is a Polish space as $S$ is finite and $Q$ is a compact subset of $\mathbb{R}^{S \times \uA}$, and the action set $\uA$ is finite. 
\end{proof}

We can use value iteration to find the optimal policy, however, the continuous state space poses a challenge. Therefore, we employ neuro-dynamic programming (NDP), specifically approximate value iteration \citep{ref:Bertsekas96} that is defined by the following iteration
\begin{flalign}
    v_{\kappa+1} = \mathcal{A}\mathcal{T}v_\kappa,
\end{flalign}
where $\mathcal{A}$ is the approximation operator which can take various forms depending on the application such as quantization or parametric approximators and $\mathcal{T}$ is the Bellman operator. Specifically,
\begin{subequations}\label{eq:value}
    \begin{flalign}
        &(\mathcal{T}v_\kappa)(z) = \max_{u\in \oA} \left\{ r(z,u) + \gamma \int_Z v_\kappa(\tilde{z}) \overline{p}(d\tilde{z}|z,u) \right\},\\
        &(\mathcal{AT}v_\kappa)(z) = \max_{u\in \oA} \left\{ r(\widehat{z},u) + \gamma \int_{\widehat{Z}} v_\kappa(\tilde{z}) \overline{p}(d\tilde{z}|\widehat{z},u) \right\},
    \end{flalign} 
\end{subequations}
where $\widehat{Z} \subseteqq Z$ is the sampled state space for approximation and can be obtained via uniform or adaptive sampling.

\begin{proposition}{Approximation Error}\label{prop:error}
    Given the sampled state space $\widehat{Z}$, assume that there exists $\Delta \in \mathbb{R}_+$ such that $\min_{\widehat{z} \in \widehat{Z}}\| z - \widehat{z} \|_2 < \Delta$ for all $z \in Z$. Then, we have
    \begin{flalign}\label{eq:error}
        \|(\mathcal{T}v_\kappa)(z) - (\mathcal{AT}v_\kappa)(z)\|_\infty \leq \varepsilon,
    \end{flalign}
    where 
    \begin{flalign}
        \varepsilon := \frac{\Delta \sqrt{|\uA|}}{\tau (1-\gamma)^3} \max_{s,u,d} |r_u(s,u,d)|,
    \end{flalign}
    and $(\mathcal{T}v_\kappa)(z)$, $(\mathcal{AT}v_\kappa)(z)$ are as described in (\ref{eq:value}).
\end{proposition}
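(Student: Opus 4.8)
The plan is to read the approximation operator $\mathcal{A}$ as the nearest-neighbour map, so that $(\mathcal{AT}v_\kappa)(z) = (\mathcal{T}v_\kappa)(\widehat z)$ with $\widehat z \in \widehat Z$ the sample nearest to $z$; the hypothesis then supplies $\|z-\widehat z\|_2 < \Delta$. Since the discrete component of the state is finite, the nearest sample shares the compromise pattern $s$, so the whole distance is carried by the $Q$-estimate coordinate, $\|q - \widehat q\|_2 < \Delta$. Writing $R := \max_{s,\oa,\ua}|r_u(s,\oa,\ua)|$, the statement then reduces to a Lipschitz estimate that holds uniformly in $z$: I will show that $\mathcal{T}v_\kappa$ is Lipschitz in $z$ with the constant appearing in $\varepsilon$, and evaluate it at two points at distance at most $\Delta$. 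Applying $|\max_{\oa} f(\oa) - \max_{\oa} g(\oa)| \le \max_{\oa}|f(\oa)-g(\oa)|$ to \eqref{eq:value}, the gap splits into a reward term $|r(z,\oa)-r(\widehat z,\oa)|$ and a discounted transition term.

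For the reward term I would exploit that $r(z,\oa)=\mathrm{E}_{\ua\sim\BR(q)}[r_u(s,\oa,\ua)]$ depends on $z$ only through the soft-max weights $\BR(q)$. The key lemma is that $\BR$ is $\tfrac{1}{\tau}$-Lipschitz in $\ell_2$: its Jacobian equals $\tfrac{1}{\tau}\bigl(\mathrm{diag}(\BR(q))-\BR(q)\BR(q)^\top\bigr)$, which is positive semidefinite with trace $\tfrac{1}{\tau}(1-\|\BR(q)\|_2^2)<\tfrac{1}{\tau}$, hence spectral radius below $\tfrac{1}{\tau}$. Converting $\ell_2$ to $\ell_1$ costs a factor $\sqrt{|\uA|}$, so $\|\BR(q)-\BR(\widehat q)\|_1 \le \tfrac{\sqrt{|\uA|}}{\tau}\|q-\widehat q\|_2$ and therefore $|r(z,\oa)-r(\widehat z,\oa)| \le \tfrac{\sqrt{|\uA|}}{\tau}R\,\Delta$. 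This already produces the prefactor $\Delta\sqrt{|\uA|}R/\tau$.

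The transition term is the crux. Expanding $\overline p$ as the expectation over $\ua\sim\BR(q)$ and $s_+\sim p(\cdot\mid s,\oa,\ua)$, and using that the next estimate $q_+=\Phi(q,s,\ua,s_+)$ is a deterministic, non-expansive function of $q$ (the update \eqref{eq:Qupdate} contracts the edited coordinate by $1-\alpha(1-\gamma)\le 1$ and leaves the rest unchanged), I would split the difference of the two integrals into (i) a soft-max-weight mismatch, bounded by $\|v_\kappa\|_\infty\,\|\BR(q)-\BR(\widehat q)\|_1$, and (ii) a value-shift term, bounded by $L_{v_\kappa}\,\|\Phi(q,\cdot)-\Phi(\widehat q,\cdot)\|_2\le L_{v_\kappa}\Delta$, where $L_{v_\kappa}$ is the Lipschitz constant of $v_\kappa$ and $\|v_\kappa\|_\infty\le R/(1-\gamma)$. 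The main obstacle is thus a uniform Lipschitz bound on the iterates: I would prove that the set of functions bounded by $R/(1-\gamma)$ and $L$-Lipschitz is invariant under $\mathcal T$, which forces the self-referential inequality $L \le \tfrac{\sqrt{|\uA|}}{\tau}R + \gamma\bigl(\tfrac{\sqrt{|\uA|}}{\tau}\|v\|_\infty + L\bigr)$; solving this geometric (Neumann) recursion, legitimate because $\gamma<1$, yields $L = O\!\bigl(\tfrac{\sqrt{|\uA|}R}{\tau(1-\gamma)^2}\bigr)$.

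Assembling the three contributions, each factor of $1/(1-\gamma)$ enters in turn — once from $\|v_\kappa\|_\infty$, and the remaining two from resolving the Lipschitz recursion — and consolidating them with the $\sup$-norm bounds gives the stated $\varepsilon=\tfrac{\Delta\sqrt{|\uA|}}{\tau(1-\gamma)^3}\max_{s,\oa,\ua}|r_u(s,\oa,\ua)|$. I expect the genuinely delicate points to be (a) proving invariance of the Lipschitz class under $\mathcal T$, so that $L_{v_\kappa}$ is uniform in $\kappa$, and (b) the norm bookkeeping between the $\ell_\infty$-nonexpansiveness of the update $\Phi$ and the $\ell_2$ metric that defines $\Delta$ on $Z$; the soft-max and reward estimates are routine once the Jacobian bound is in hand.
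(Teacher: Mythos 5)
The paper itself does not spell this proof out --- it is a one-line citation to \cite[Proposition 3]{ref:Arslantas24} --- so you are supplying details the paper delegates, and your architecture (softmax Lipschitz bound, uniform Lipschitz bound on the value iterates, nearest-sample evaluation) is the right skeleton for such a proof; your reward-term estimate via the Jacobian $\tfrac{1}{\tau}\bigl(\mathrm{diag}(\BR(q))-\BR(q)\BR(q)^\top\bigr)$ is correct. The genuine gap is the step you yourself flag as point (b), and it is not mere bookkeeping: the Q-update map $\Phi$ is \emph{not} non-expansive in the $\ell_2$ metric that defines $\Delta$. The overwritten coordinate $(s,\ua)$ is refilled with a value that depends on \emph{other} coordinates through $\max_{\tilde{\ua}}q(s_+,\tilde{\ua})$; so if $q$ and $\widehat q$ agree at $(s,\ua)$ but differ by $M$ at the maximizing entry of $s_+$, the update creates a fresh discrepancy of size $\alpha\gamma M$ at $(s,\ua)$ while the old discrepancy at $(s_+,\cdot)$ persists, giving $\|\Phi(q)-\Phi(\widehat q)\|_2 \le \sqrt{1+(\alpha\gamma)^2}\,\|q-\widehat q\|_2$ with equality achievable in this configuration. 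Your self-referential inequality then has ratio $\gamma\sqrt{1+(\alpha\gamma)^2}$ rather than $\gamma$, and this can exceed $1$ (e.g.\ $\gamma=0.99$, $\alpha=0.5$), so the geometric recursion does not close and no uniform $L$ follows for general $\alpha,\gamma$. The repair is to run the entire Lipschitz argument in $\ell_\infty$, where $\Phi$ genuinely is non-expansive (the edited coordinate moves by at most $(1-\alpha(1-\gamma))\|q-\widehat q\|_\infty$ and the rest are unchanged), and only pass to the hypothesis at the very end via $\|q-\widehat q\|_\infty\le\|q-\widehat q\|_2<\Delta$.

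Two secondary issues. First, your reading of $\mathcal{A}$ as pure base-point substitution drops the fact that in (\ref{eq:value}b) the integral runs over $\widehat Z$, not $Z$: the \emph{next} states are projected onto samples as well, and controlling that term needs the same Lipschitz machinery; it is also the natural source of the third $1/(1-\gamma)$. Indeed, solving your own recursion gives $L\le \sqrt{|\uA|}\,R/\bigl(\tau(1-\gamma)^2\bigr)$ and hence a one-step error $L\Delta$ with only \emph{two} factors of $1/(1-\gamma)$, not the three you announce --- proving a tighter bound than stated would be harmless, but the inconsistent tally is the symptom of the dropped projection term. Second, the hypothesis only guarantees \emph{some} sample within $\Delta$ of $z$; it does not guarantee the nearest sample shares the discrete component $s$, so your opening reduction needs either $\Delta$ small relative to the separation between distinct $s$'s or the convention that $\widehat Z$ samples every $s$. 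None of these defects is conceptual --- the plan is sound and, with $\ell_\infty$ bookkeeping, yields the stated $\varepsilon$ --- but as written the central recursion fails exactly where the proof has to do real work.
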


\begin{proof}
    The proof follows from \cite[Proposition 3]{ref:Arslantas24}.
\end{proof}

Having established a uniform bound on the single-step approximation error, we now address the stability of the value iteration process. The following proposition demonstrates that these errors do not diverge but result in a value function within a bounded neighborhood of the optimal value function.

\begin{proposition}{Convergence}\label{prop:convergence}
    Let $\varepsilon_\kappa := (\mathcal{T}v_\kappa)(z) - (\mathcal{AT}v_\kappa)(z)$ denote the approximation error at $\kappa$. Then,
    \begin{flalign}
        \limsup_{\kappa \rightarrow \infty} \|v^*(z) - (\mathcal{AT}v_\kappa)(z) \|_\infty \leq \frac{2\gamma}{(1-\gamma)^2}\varepsilon,
    \end{flalign} 
    where
    \begin{flalign}
        v^*(z) = (\mathcal{T}v^*)(z)
    \end{flalign}
    is the optimal value function of $\mathcal{Z}$ satisfying Bellman optimality equation.
\end{proposition}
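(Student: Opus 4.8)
The plan is to treat each approximate value-iteration step as an exact Bellman update contaminated by a bounded perturbation, and then propagate the single-step bound of Proposition~\ref{prop:error} through the contraction geometry of $\mathcal{T}$. The structural fact I would record first is that, since $\gamma \in (0,1)$ and both the pointwise maximum over the finite set $\oA$ and the conditional expectation $\int_Z(\cdot)\,\overline{p}(d\tilde z\mid z,u)$ are nonexpansive in the sup-norm, the Bellman operator $\mathcal{T}$ in \eqref{eq:value} is a $\gamma$-contraction: $\|\mathcal{T}v-\mathcal{T}v'\|_\infty\le\gamma\|v-v'\|_\infty$ for bounded measurable $v,v'$ on $Z$. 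In particular $v^*$ is its unique fixed point, consistent with $v^*=\mathcal{T}v^*$.

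Next I would write the iteration as $v_{\kappa+1}=\mathcal{AT}v_\kappa=\mathcal{T}v_\kappa-\varepsilon_\kappa$ with $\|\varepsilon_\kappa\|_\infty\le\varepsilon$ by Proposition~\ref{prop:error}, set $e_\kappa:=\|v^*-v_\kappa\|_\infty$, and combine $v^*=\mathcal{T}v^*$ with the triangle inequality and the contraction estimate to obtain the scalar recursion
\begin{flalign}
    e_{\kappa+1}\le\|\mathcal{T}v^*-\mathcal{T}v_\kappa\|_\infty+\|\varepsilon_\kappa\|_\infty\le\gamma e_\kappa+\varepsilon.
\end{flalign}
Unrolling yields $e_{\kappa+1}\le\gamma^{\kappa+1}e_0+\varepsilon\sum_{j=0}^{\kappa}\gamma^{j}$; letting $\kappa\to\infty$, the transient $\gamma^{\kappa+1}e_0$ vanishes and the geometric series sums to $1/(1-\gamma)$, so $\limsup_{\kappa\to\infty}\|v^*-(\mathcal{AT}v_\kappa)\|_\infty\le\varepsilon/(1-\gamma)$. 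This already controls the value-function iterate; to recover the stated constant I would additionally invoke the standard performance-loss estimate of approximate dynamic programming \citep{ref:Bertsekas96}, by which a policy greedy with respect to an $e_\kappa$-accurate value function is suboptimal by at most $\tfrac{2\gamma}{1-\gamma}e_\kappa$ in sup-norm. Composing this with the accumulated-error bound above produces the factor $\tfrac{2\gamma}{(1-\gamma)^2}\varepsilon$ in the limit.

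The step I expect to be the main obstacle is not the algebra but the functional-analytic bookkeeping forced by the continuum state space $Z=S\times Q$: I must verify that $\mathcal{T}$ maps the class of bounded measurable functions on the Polish space $Z$ into itself and that the contraction bound holds \emph{uniformly} in $z\in Z$, so that passing to sup-norms and to the scalar recursion is legitimate. Establishing the greedy-policy step likewise requires a measurable-selection argument ensuring that the $\argmax$ in \eqref{eq:value} defines a measurable stationary policy; this is where the finiteness of $\oA$ and the Polish-space regularity already invoked for the existence of an optimal stationary policy do the work, after which the contraction-plus-geometric-series argument is routine.
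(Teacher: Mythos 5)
Your argument is, in substance, the same mathematics as the paper's proof --- the paper simply outsources it: after noting $\|\varepsilon_\kappa\|_\infty \leq \varepsilon$ via Proposition~\ref{prop:error}, it invokes \citep[Proposition 6.8]{ref:Bertsekas96}, whose proof is exactly the perturbed-contraction argument you reconstruct. What your version buys is transparency: it makes explicit that $\mathcal{T}$ is a $\gamma$-contraction in the sup-norm, that the iterate error obeys the recursion $e_{\kappa+1} \leq \gamma e_\kappa + \varepsilon$, and hence that the iterates themselves satisfy the tighter bound $\limsup_{\kappa\to\infty}\|v^* - (\mathcal{AT}v_\kappa)\|_\infty \leq \varepsilon/(1-\gamma)$; it also correctly identifies the constant $2\gamma/(1-\gamma)^2$ as coming from composing that bound with the greedy-policy performance-loss estimate. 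Your measurability and measurable-selection concerns are legitimate, and the paper does not address them either; they are implicitly covered by the Polish-space setting already invoked for the existence of an optimal stationary policy.

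One caveat you should make explicit: the composition in your final step bounds $\|v^* - v_{\pi_\kappa}\|_\infty$, where $\pi_\kappa$ is the policy greedy with respect to $v_\kappa$ and $v_{\pi_\kappa}$ is its value --- it does \emph{not} bound the quantity $\|v^*(z) - (\mathcal{AT}v_\kappa)(z)\|_\infty$ appearing in the proposition's display. As a literal proof of the display, what you actually have is your direct bound $\varepsilon/(1-\gamma)$, and this implies the stated bound only when $\varepsilon/(1-\gamma) \leq 2\gamma\varepsilon/(1-\gamma)^2$, i.e., when $\gamma \geq 1/3$ (true in the paper's experiments, where $\gamma = 0.95$, but not for every $\gamma \in (0,1)$). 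Indeed, for the displayed inequality taken verbatim, the right-hand side scales like $\gamma$ as $\gamma$ becomes small, while a persistent one-step approximation error keeps the left-hand side of order the actual error, so no proof can close that case; the mismatch is inherited from the statement itself, which attaches the policy-performance constant of \citep{ref:Bertsekas96} to the value-iterate error. So your proof is as sound as the paper's citation-based one, and more informative, but you should flag that the constant in the display properly belongs to the greedy policies' values rather than to the iterates you bounded.
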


\begin{proof}
    We have $\|\varepsilon_n\|_\infty \leq \varepsilon$ by Proposition \ref{prop:error}. The result then follows directly from \citep[Proposition 6.8]{ref:Bertsekas96}. 
\end{proof}

Proposition \ref{prop:convergence} establishes that our approximate value iteration converges to a bounded region of the optimal value function. This theoretical guarantee justifies the use of the approximation scheme in our practical implementation. 

\begin{remark}
    Although our empirical evaluation focuses on stochastic security games, the theoretical results developed in this section hold for any general-sum stochastic game in which one agent’s learning dynamics can be modeled as part of the environment. The structure of the analysis does not rely on security-specific properties; it only requires finitely many actions, a compact representation of the learning agent’s state, and well-defined transition dynamics. 
\end{remark}

\section{Experiments}\label{sec:experiments}
In the experiments, we used an LIN, defined similarly to the one in \cite{ref:Nguyen09}. This LIN consists of three security assets. The matrices $I$, $V$, and the vector $x$ are defined as follows
\begin{align}
    I =
    \begin{bmatrix}
    0.9 & 0.2 & 0 \\
    0 & 0.7 & 0 \\
    0.1 & 0.1 & 1
    \end{bmatrix} \quad
    V = 
    \begin{bmatrix}
    0.7 & 0 & 0 \\
    0.2 & 0.5 & 0 \\
    0.1 & 0.3 & 0.9
    \end{bmatrix} \quad
    x = 
    \begin{bmatrix}
    10 \\
    10 \\
    20
    \end{bmatrix}.
\end{align}

Each three security asset can either be compromised or not compromised by the attacker, meaning there are in total $8$ states as depicted in Fig. \ref{fig:game}. At each stage, both the attacker and the defender can choose one node to attack or defend or do nothing, respectively, meaning that each of them has $4$ actions. The state transition probabilities are defined so that $p_{n0}=0.7$, $p_{n1}=0.4$, $p_{d0} = 0.5$, $p_{d1} =0.2$, and $p_r = 0.2$, and $p_e = 0.3$, which are game reset and game end probabilities in case the attacker do nothing or fails. There are two models used for value function estimation. The models are neural networks that defines a sequence of fully connected layers with ReLU activations, designed for value function approximation with an input size of 36 (from each Q-value entry of the 4 actions across 9 total states, including the ``do nothing" action and the terminal state). The first complex model \textbf{CM} has hidden layers of 64, 64, and 32 neurons, and a vector output of size 9, that maps the Q-function to a value for each state, while the simpler model \textbf{SM} has no hidden layers and only an input and output layer. We train the NDP algorithm for 50 horizons, where in each horizon the model is trained for two epochs with $5 \times 10^6$ samples for training. The average MSE test error per horizon after two epochs is less than $10^{-2}$ for the model \textbf{CM} during the first few horizons and even less for the following horizons. For \textbf{SM} model, the training and testing errors are around $5 \times 10^{-1}$ for the first horizon and decreases below $10^{-2}$ after a few horizons.

In our experiments, we used learning rate $\alpha = 0.05$, discount factor $\gamma = 0.95$, and temperature parameter $\tau \in \{0.5, 1, 2\}$. We ran $50$ trials for each experiment with $2000$ episodes, averaging the total discounted reward per episode over the trials.
\begin{figure}[t!]
    \centering
    \includegraphics[width=.9\columnwidth]{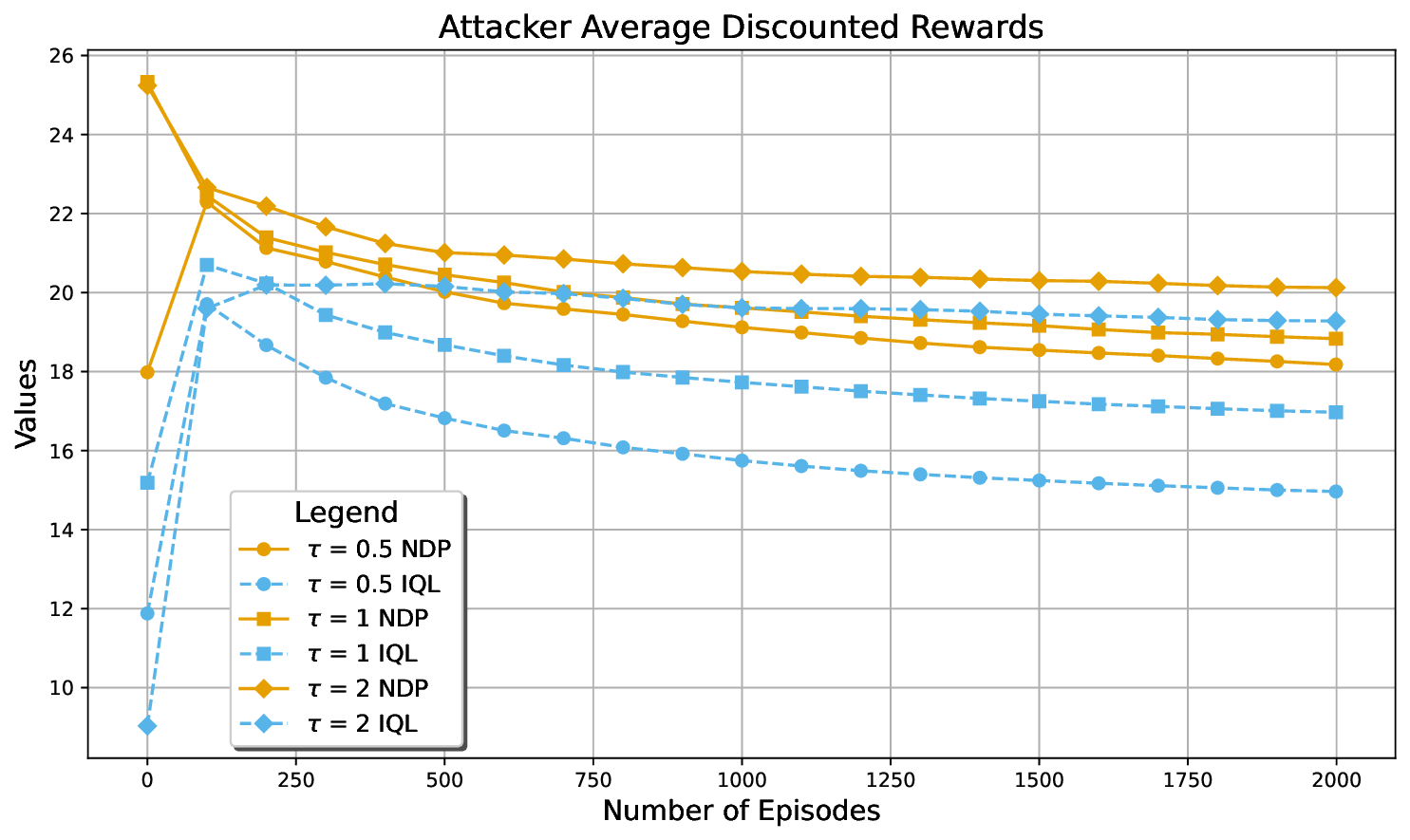}
    \caption{Average discounted rewards of the attacker against different exploration levels.}
    \label{fig:tau}
\end{figure}

\begin{figure}[t!]
    \centering
    \includegraphics[width=.9\columnwidth]{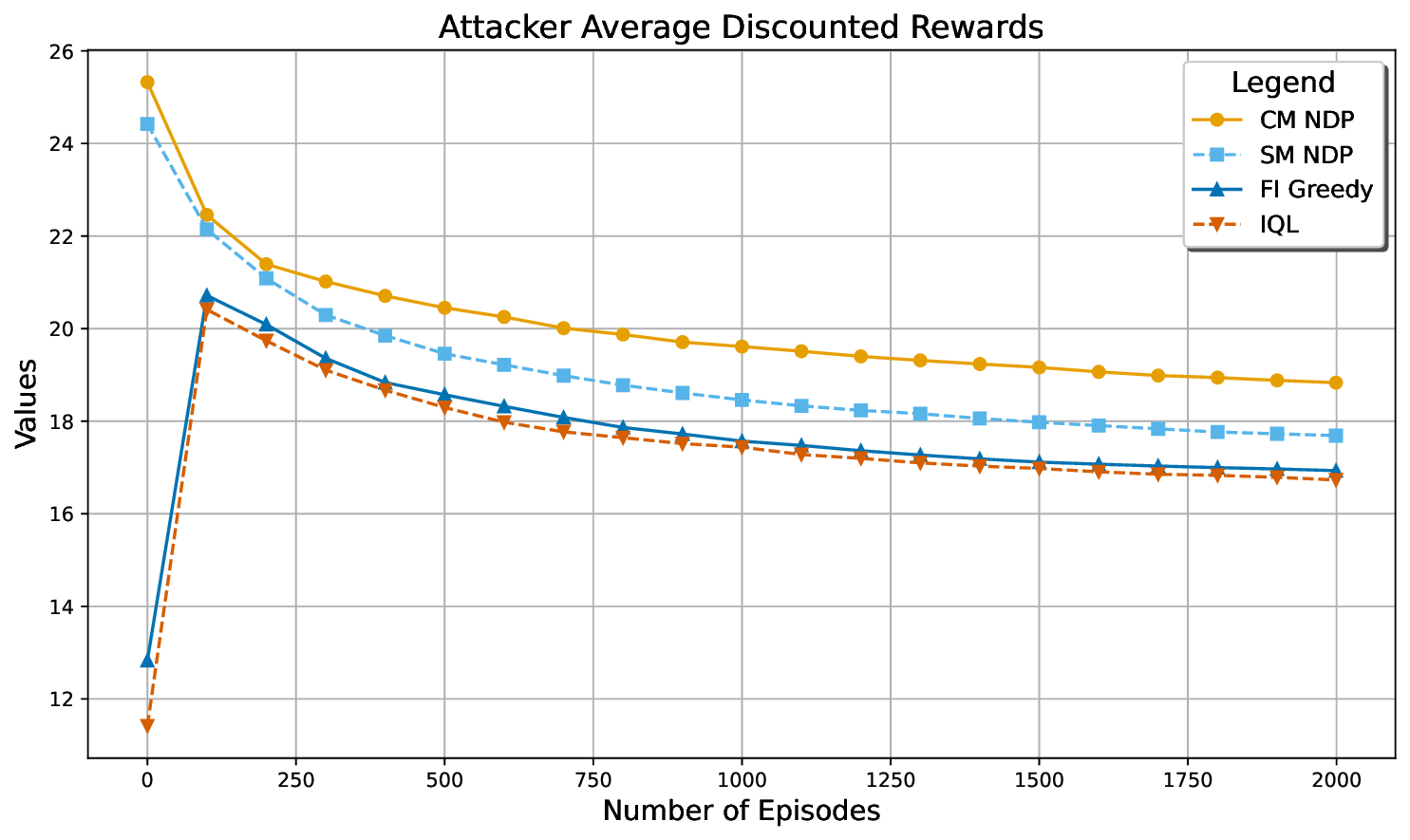}
    \caption{Average discounted rewards of the attacker with different learning models.}
    \label{fig:mod}
\end{figure}
We first investigated the impact of exploration on vulnerability. As shown in Fig. \ref{fig:tau}, the attacker consistently achieved a significantly higher total discounted reward compared to the defender. We observed that with increasing $\tau$ (more exploration), the reward for the attacker also increased. However, the gap between NDP attacker vs IQL defender and IQL attacker vs IQL defender decreased, indicating reduced vulnerability. 

Next, we evaluated the effect of the attacker's model complexity. Fig. \ref{fig:mod} shows that the \textbf{CM} outperformed the \textbf{SM} by achieving a higher average discounted reward. This demonstrates that the attacker's ability to approximate the defender's value function directly affects how much it can exploit the defender. Notably, both attacker models outperformed the greedy best response (\textbf{FI Greedy}) where the agent utilizes knowledge of the opponent's $q$ values and the game model to play in a way that maximizes the immediate reward given this information, further supporting the dominance of the NDP algorithm.

\section{Conclusion}\label{sec:conclusion}
In this work we investigated the challenge of analyzing how vulnerable RL agents are in stochastic security games. We specifically modeled the interaction between a defender employing independent Q-learning and an omniscient attacker in linear influence networks to capture the essential interdependencies among security assets.
Our primary contribution involved adapting the analysis of learning dynamic exploitation from repeated normal-form games to stochastic games through neuro-dynamic programming. This methodology allowed the attacker to approximate the defender's Q-value function, enabling  manipulation of the defender's learning process.
The experimental results provided compelling and consistent evidence regarding the extent of this vulnerability. Across various settings, the omniscient attacker consistently achieved a higher average discounted reward compared to the defender, confirming that the attacker can strategically exploit the defender's learning dynamics within the stochastic game environment.
This vulnerability highlights the necessity for future research to focus on developing robust and resilient RL-based defense mechanisms that are less susceptible to such sophisticated, model-aware manipulation.

\section{Acknowledgements}
This work was supported by The Scientific and Technological Research Council of Türkiye (TUBITAK) BIDEB 2211-A National PhD scholarship program.

\end{spacing}
\newpage
\begin{spacing}{1}
\bibliographystyle{plainnat}
\bibliography{mybib}
\end{spacing}

\end{document}